\definecolor{colorref}{rgb}{0.4648,0,0} 
\definecolor{colorcite}{rgb}{0,0.2902,0.1765}
\def\th@newremark{\th@remark\thm@headfont{\bfseries}}
\newtheorem{proposition}{Proposition}
\newtheorem{lemma}{Lemma}
\theoremstyle{newremark}
\theoremstyle{newremark}
\theoremstyle{newremark}
\theoremstyle{definition}
\newcommand{\set}[1]{{\mathcal{#1}}}
\newcommand{\setm}{{\set{M}}}
\newcommand{\signal}[1]{{\boldsymbol{#1}}}
\newcommand{\real}{{\mathbb R}}
\newtheorem{definition}{Definition}
\newtheorem{fact}{Fact}
\newcommand{\Natural}{{\mathbb N}}
\renewcommand{\refeq}[1]{(\ref{#1})}
\title{Spectral radii of asymptotic mappings and the convergence speed of the standard fixed point algorithm}
\name{R.~L.~G.~Cavalcante and S.~Sta\'nczak\thanks{\scriptsize This research was supported by Grant STA 864/9-1 from the German Research Foundation (DFG). This work has also been performed in the framework of the Horizon 2020 project ONE5G (ICT - 760809) receiving funds from the European Union. The authors would like to acknowledge the contributions of their colleagues in the project, although the views expressed in this contribution are those of the authors and do not necessarily represent the project. }}
\address{Fraunhofer Heinrich Hertz Institute and Technical University of Berlin}
\begin{document}
\ninept

\maketitle
\begin{abstract}
	Important problems in wireless networks can often be solved by computing fixed points of standard or contractive interference mappings, and the conventional fixed point algorithm is widely used for this purpose.
Knowing that the mapping used in the algorithm is not only standard but also contractive (or only contractive) is valuable information because we obtain a guarantee of geometric convergence rate, and the rate is related to a property of the mapping called modulus of contraction. To date, contractive mappings and their moduli of contraction have been identified with case-by-case approaches that can be difficult to generalize. To address this limitation of existing approaches, we show in this study that the spectral radii of asymptotic mappings can be used to identify an important subclass of contractive mappings and also to estimate their moduli of contraction. In addition, if the fixed point algorithm is applied to compute fixed points of positive concave mappings, we show that the spectral radii of asymptotic mappings provide us with simple lower bounds for the estimation error of the iterates. An immediate application of this result proves that a known algorithm for load estimation in wireless networks becomes slower with increasing traffic.

\end{abstract}
\begin{keywords}
Contractive interference mappings, standard interference mappings, convergence rate
\end{keywords}
\section{Introduction}
\label{sec:intro}
The objective of this study is to investigate convergence properties of the sequence $(\signal{x}_n)_{n\in\Natural}$ generated by the following instance of the \emph{standard fixed point algorithm}:
\begin{align}
\label{eq.fpi}
\signal{x}_{n+1}=T(\signal{x}_n),
\end{align}
where $\signal{x}_{1}\in\real_{+}^N$ is an arbitrary initial point; $\real_+^N$ denotes the set of nonnegative vectors of dimension $N$; and $T:\real^N_+\to\real^N_{+}$ is a standard interference mapping as defined in \cite{yates95} or a ($c$-)contractive mapping as defined in \cite{fey2012}, or both. Previous studies \cite{yates95,fey2012} have shown that, if $T$ is a standard interference mapping with $\mathrm{Fix}(T):=\{\signal{x}\in\real_+^N~|~\signal{x}=T(\signal{x})\}\neq\emptyset$ or a contractive mapping, then $\mathrm{Fix}(T)$ is a singleton, and the sequence generated by \refeq{eq.fpi} converges to the fixed point $\signal{x}^\star\in\mathrm{Fix}(T)$. The algorithm in \refeq{eq.fpi} plays a pivotal role in many power and resource allocation mechanisms in wireless networks \cite{yates95,renato14SPM,ho2015,martin11,huang1998rate,renato17load,renato2016,renato2016maxmin,slawomir09,feh2013,nuzman07,fey2012,boche2008,martin2006robust}, so establishing its convergence rate is a problem of significant practical importance \cite{martin11,fey2012,huang1998rate,renato2016}.

If the mapping $T$ in \refeq{eq.fpi} is only a standard interference mapping, then the fixed point algorithm  can be particularly slow because we can have sublinear convergence rate \cite[Example 1]{fey2012}. This fact has motivated the authors of \cite{fey2012} to introduce the above-mentioned $c$-contractive interference mappings, where $c\in~[0,1[$ is an intrinsic property of the mapping called {\it modulus of contraction} (see Definition~\ref{definition.mappings} in Sect.~\ref{sect.pre} for details). In particular, by knowing that the mapping $T$ in \refeq{eq.fpi} is contractive, we rule out the possibility of sublinear convergence rate. More precisely, by using \refeq{eq.fpi} with a $c$-contractive mapping $T$ to estimate $\signal{x}^\star\in\mathrm{Fix}({T})$, the error $\|\signal{x}_n-\signal{x}^\star\|$ of the estimate $\signal{x}_n$ at iteration $n\in\Natural$ is upper bounded by \cite{fey2012}
\begin{align}
\label{eq.conv_rate}
\|\signal{x}_n-\signal{x}^\star\|\le c^{n-1} B \|\signal{x}_1-\signal{x}^\star\|,
\end{align}
 where $B\in\real_+$ is a parameter that depends on the choice of the norm $\|\cdot\|$. Therefore, with knowledge of the smallest modulus of contraction $c$, we can evaluate whether the recursion in \refeq{eq.fpi} can obtain a good estimate of $\signal{x}^\star\in\mathrm{Fix}(T)$ with few iterations. However, simple and general approaches to verify whether a  mapping is contractive have not been proposed in \cite{fey2012}. Furthermore, that study has not considered computationally efficient methods to obtain the smallest modulus of contraction. 

Against this background, in this study we show that information about the smallest modulus of contraction of a convex contractive interference mapping can be obtained from the spectral radius of its associated asymptotic mapping, a concept recently introduced in  \cite{renato17globalsip,renato2017performance}. We further show easily verifiable sufficient conditions to determine whether a given mapping is contractive. In addition, we give lower bounds for the estimation error of the iterates in \refeq{eq.fpi} with positive concave mappings that are not necessarily contractive. Unlike the bounds in previous studies \cite[Ch.~5]{martin11} \cite{huang1998rate,fey2012}, those derived here only depend on parameters that are easy to compute in practice, and we do not assume that the mappings used in \refeq{eq.fpi} are constructed by combining a finite number of affine mappings. As an application of the results in this study, we show bounds for the estimation error of the iterates generated by \refeq{eq.fpi} with a \emph{nonlinear} mapping widely used to estimate the load of base stations in wireless networks \cite{Majewski2010,siomina12,renato14SPM,renato2016,siomina2014,ho2014data}. In this application, our bounds give a formal proof  that the algorithm for load estimation becomes slower with increasing traffic.

\section{Preliminaries}
\label{sect.pre}
In this section we establish notation and review  the main mathematical concepts used in this study. In more detail, the sets of nonnegative and positive reals are denoted by, respectively, $\real_+$ and $\real_{++}$. Inequalities involving vectors should be understood coordinate-wise. A norm $\|\cdot\|$ in $\real^N$ is \emph{monotone} if $(\forall \signal{x}\in\real_+^N)(\forall \signal{y}\in\real_+^N)~ \signal{x}\le\signal{y}\Rightarrow \|\signal{x}\|\le\|\signal{y}\|$. We say that a sequence $(\signal{x}_n)_{n\in\Natural}\subset\real_+^N$ converges  to $\signal{x}^\star$ if $\lim_{n\to\infty}\|\signal{x}_n-\signal{x}^\star\|=0$ for some (and hence for every) norm $\|\cdot\|$ in $\real^N$, and in this case we also write $\signal{x}_n\to\signal{x}^\star$. Given a norm $\|\cdot\|$ and a sequence $(\signal{x}_n)_{n\in\Natural}\subset\real_+^N$, if $(\exists c\in~[0,1[)(\exists B\in\real_+)(\forall n\in\Natural)~ {\|\signal{x}_{n+1}-\signal{x}^\star\|}\le c^n B \|\signal{x}_1-\signal{x}^\star\|$, then we say that $(\signal{x}_n)_{n\in\Natural}$ (or the algorithm generating the sequence) converges \emph{geometrically fast}. 
The \emph{(effective) domain} of a function $f:\real^N\to\real\cup\{-\infty,\infty\}$ is the set given by $\mathrm{dom}f:=\{\signal{x}\in\real^N~|~f(\signal{x})<\infty\}$, and $f$ is {\it proper} if $\mathrm{dom}f\neq\emptyset$ and $(\forall \signal{x}\in\real^N)~f(\signal{x})>-\infty$.

\begin{definition}
	\label{definition.mappings} (Standard and $c$-contractive interference mappings:) 	Consider the following statements for a \emph{continuous} mapping $T:\real^N_+\to\real_{++}^N$:
	\begin{itemize}
		\item[(i)] [monotonicity] $(\forall \signal{x}\in\real^N_{+})(\forall \signal{y}\in\real^N_{+}) ~ \signal{x}\ge\signal{y} \Rightarrow T(\signal{x})\ge T(\signal{y})$
		\item[(ii)] [scalability] $(\forall \signal{x}\in\real^N_+)$ $(\forall \alpha>1)$  $\alpha {T}(\signal{x})>T(\alpha\signal{x})$.
		\item[(iii)] [contractivity] $(\exists (\signal{v}, c) \in\real_{++}^N\times [0,1[) (\forall \signal{x}\in\real^N_+) (\forall \epsilon>0) ~ T(\signal{x}+\epsilon\signal{v})\le T(\signal{x})+c\epsilon\signal{v}$
	\end{itemize}
	If (i) and (ii) are satisfied, then $T$ is said to be a \emph{standard interference mapping} \cite{yates95}. If $T$ satisfies (i) and (iii), then $T$ is called a \emph{contractive interference mapping}. In this case, if a scalar $c$ with the property in (iii) is known, then $c$ is called a \emph{modulus of contraction} for $T$, and we also say that $T$ is $c$-\emph{contractive} to emphasize this knowledge \cite{fey2012}.
\end{definition}

A mapping $T:\real_+^N\to\real_+^N$ is said to be \emph{concave} (respectively \emph{convex}) if each coordinate function is concave (respectively convex). Recall from the Introduction that the set of fixed points of a mapping $T:\real_+^N\to\real_+^N$ is denoted by $\mathrm{Fix}(T):=\{\signal{x}\in\real_+^N~|~T(\signal{x})=\signal{x}\}$. If $T$ is contractive, then $\mathrm{Fix}(T)$ is a singleton \cite{fey2012}. If $T$ is a standard interference mapping, then $\mathrm{Fix}(T)$ is either a singleton or the empty set \cite{yates95}.

Given a proper function $f:\real^N\to\real\cup\{\infty\}$, we say that $f_\infty:\real^N\to\real\cup\{-\infty\}\cup\{ \infty\}:\signal{x}\mapsto\liminf_{t\to\infty,\signal{y}\to\signal{x}}{f(t\signal{y})}/{t}$ is the \emph{asymptotic function} associated with $f$ \cite[Ch.~2.5]{aus03}, and note that $f_\infty$ is positively homogeneous [i.e., $(\forall \signal{x}\in\real^N)(\forall \alpha> 0)f(\alpha\signal{x})=\alpha~f(\signal{x})$] and lower semicontinuous \cite[Proposition~2.5.1]{aus03}. Asymptotic functions associated with convex functions have the following useful property:

\begin{fact}
	\label{fact.sup_conv}
	\cite[Proposition~2.5.2]{aus03} Let $f:\real^N\to\real\cup\{\infty\}$ be proper, lower semicontinuous, and convex. Then 
	\begin{align*}
	(\forall\signal{d}\in\real^N)~f_\infty(\signal{d}) = \sup\{f(\signal{x}+\signal{d})-f(\signal{x})~|~\signal{x}\in\mathrm{dom}f\}. 
	\end{align*}
\end{fact}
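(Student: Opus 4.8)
Write $h(\signal{d}):=\sup\{f(\signal{x}+\signal{d})-f(\signal{x})\mid\signal{x}\in\mathrm{dom}f\}$ for the right-hand side; the plan is to establish the two inequalities $f_\infty(\signal{d})\ge h(\signal{d})$ and $f_\infty(\signal{d})\le h(\signal{d})$ separately, both of them resting on one elementary property of a convex function: for fixed $\signal{x}\in\mathrm{dom}f$ and $\signal{d}\in\real^N$, the divided difference $q(t):=\bigl(f(\signal{x}+t\signal{d})-f(\signal{x})\bigr)/t$ is nondecreasing in $t>0$ (write $\signal{x}+s\signal{d}=(1-\tfrac{s}{t})\signal{x}+\tfrac{s}{t}(\signal{x}+t\signal{d})$ for $0<s<t$ and apply convexity), so that $\lim_{t\to\infty}q(t)=\sup_{t>0}q(t)$; and, writing $\signal{x}+(t-1)\signal{d}=(1-\tfrac1t)(\signal{x}+t\signal{d})+\tfrac1t\signal{x}$, one gets $q(t)\le f(\signal{x}+t\signal{d})-f(\signal{x}+(t-1)\signal{d})$ for $t\ge1$.

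For the lower bound $f_\infty\ge h$ I would fix $\signal{x}\in\mathrm{dom}f$, take arbitrary sequences $\signal{y}_k\to\signal{d}$ and $t_k\to\infty$, and rewrite $t_k\signal{y}_k=\signal{x}+t_k\signal{d}_k$ with $\signal{d}_k:=\signal{y}_k-\signal{x}/t_k\to\signal{d}$; the monotonicity of the divided difference in direction $\signal{d}_k$ (at $t_k\ge1$ for large $k$) gives $f(t_k\signal{y}_k)/t_k\ge f(\signal{x}+\signal{d}_k)-f(\signal{x})+f(\signal{x})/t_k$. Passing to $\liminf_{k}$ and invoking lower semicontinuity of $f$ at $\signal{x}+\signal{d}$ (so that $\liminf_k f(\signal{x}+\signal{d}_k)\ge f(\signal{x}+\signal{d})$, which also covers the case $f(\signal{x}+\signal{d})=+\infty$) yields $f_\infty(\signal{d})\ge f(\signal{x}+\signal{d})-f(\signal{x})$; taking the supremum over $\signal{x}\in\mathrm{dom}f$ gives $f_\infty(\signal{d})\ge h(\signal{d})$.

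For the upper bound $f_\infty\le h$ I would again fix any $\signal{x}\in\mathrm{dom}f$ and evaluate the $\liminf$ defining $f_\infty(\signal{d})$ along the single admissible path $\signal{y}_t:=\signal{d}+\signal{x}/t\to\signal{d}$, for which $t\,\signal{y}_t=\signal{x}+t\signal{d}$, obtaining $f_\infty(\signal{d})\le\lim_{t\to\infty}f(\signal{x}+t\signal{d})/t=\sup_{t>0}q(t)$. It then remains to check $\sup_{t>0}q(t)\le h(\signal{d})$: for $t\ge1$ with $\signal{x}+(t-1)\signal{d}\in\mathrm{dom}f$, the inequality $q(t)\le f(\signal{x}+t\signal{d})-f(\signal{x}+(t-1)\signal{d})$ already gives $q(t)\le h(\signal{d})$; for $t\le1$, monotonicity gives $q(t)\le q(1)\le h(\signal{d})$; and in the remaining degenerate case, in which the ray $\{\signal{x}+s\signal{d}:s\ge0\}$ eventually leaves $\mathrm{dom}f$, there is a base point $\signal{x}'$ on that ray, just inside $\mathrm{dom}f$, with $\signal{x}'+\signal{d}\notin\mathrm{dom}f$, so $f(\signal{x}'+\signal{d})-f(\signal{x}')=+\infty$, $h(\signal{d})=+\infty$, and the inequality is trivial. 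Combining the two bounds gives $f_\infty(\signal{d})=h(\signal{d})$.

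The step I expect to be the crux is the lower bound, and specifically the appeal to lower semicontinuity: it is exactly what turns the pointwise estimate along the perturbed direction $\signal{d}_k$ into one for the direction $\signal{d}$ itself, and it cannot be dispensed with --- for a convex function that is not lower semicontinuous, $f_\infty$ (which is always lower semicontinuous by \cite[Proposition~2.5.1]{aus03}) may lie strictly below $h$. Everything else is bookkeeping: carrying the two perturbed sequences $\signal{y}_k,\signal{d}_k$ along together and isolating the case where the ray in direction $\signal{d}$ escapes $\mathrm{dom}f$. A structurally cleaner but less self-contained route would be to recall that $\mathrm{epi}\,f_\infty$ is the recession cone of the nonempty closed convex set $\mathrm{epi}\,f$, use that the recession cone of a closed convex set is computed identically from any of its points, and intersect with the vertical line through $(\signal{x},f(\signal{x}))$ to read off the claimed formula; the closedness of $\mathrm{epi}\,f$ --- i.e.\ lower semicontinuity of $f$ --- is, once more, precisely what makes that work.
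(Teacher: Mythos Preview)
Your proposal is correct, but there is nothing in the paper to compare it to: the paper states this result as a \emph{Fact} with a bare citation to \cite[Proposition~2.5.2]{aus03} and provides no argument of its own. So strictly speaking you have done more than the paper does.

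A brief comment on content. Your two-inequality argument is sound. The lower bound is clean: rerouting an arbitrary sequence $(t_k,\signal{y}_k)$ through a fixed base point $\signal{x}\in\mathrm{dom}f$ via $\signal{d}_k=\signal{y}_k-\signal{x}/t_k$ and then invoking lower semicontinuity at $\signal{x}+\signal{d}$ is exactly the right idea, and you are correct that lower semicontinuity is indispensable here. The upper bound is also fine; your case split (ray stays in $\mathrm{dom}f$ versus ray eventually leaves) handles the edge cases, and the observation that once the ray exits $\mathrm{dom}f$ one can always find $\signal{x}'\in\mathrm{dom}f$ on the ray with $\signal{x}'+\signal{d}\notin\mathrm{dom}f$ (hence $h(\signal{d})=+\infty$) is valid by convexity of $\mathrm{dom}f$.

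The alternative route you sketch at the end --- identifying $\mathrm{epi}\,f_\infty$ with the recession cone of the closed convex set $\mathrm{epi}\,f$ and using that the recession cone is computed identically from any point of the set --- is in fact the standard textbook argument and is essentially how the cited reference proves it. Your direct sequential proof is more self-contained and avoids importing the recession-cone machinery, at the price of the bookkeeping you mention; the epigraphical route is shorter once that machinery is in hand and makes the role of closedness (lower semicontinuity) transparent.
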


In the next Lemma, we show a result related to Fact~\ref{fact.sup_conv} for nonnegative concave functions. We omit the proof because of the space limitation.

\begin{lemma}
	\label{lemma.inf_conc}
	Let $f:\real^N\to\real_+\cup\{\infty\}$ be a function such that  $\mathrm{dom}~f=\real_+^N$. Assume that $f$ is continuous and concave if restricted to its domain. Then $(\forall\signal{d}\in \mathrm{dom}~f)$ $f_\infty(\signal{d}) = \inf\{f(\signal{x}+\signal{d})-f(\signal{x})~|~\signal{x}\in\mathrm{dom}~f\}$.  
\end{lemma}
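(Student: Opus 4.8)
The plan is to establish the two inequalities $f_\infty(\signal{d})\le L$ and $f_\infty(\signal{d})\ge L$ separately, where $L:=\inf\{f(\signal{x}+\signal{d})-f(\signal{x})~|~\signal{x}\in\real_+^N\}$ denotes the claimed value. The argument parallels the proof of Fact~\ref{fact.sup_conv}, except that the relevant one-dimensional difference quotients are now non-increasing rather than non-decreasing, so the roles of $\sup$ and $\inf$ are exchanged.

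The workhorse is an elementary observation: for each fixed $\signal{x}\in\real_+^N$, the map $t\mapsto\bigl(f(\signal{x}+t\signal{d})-f(\signal{x})\bigr)/t$ is non-increasing on $]0,\infty[$, because $t\mapsto f(\signal{x}+t\signal{d})$ is concave and finite on $[0,\infty[$ (here $\signal{x}+t\signal{d}\in\real_+^N=\mathrm{dom}\,f$ for all $t\ge 0$). Consequently $\lim_{t\to\infty}f(\signal{x}+t\signal{d})/t$ exists; since $f\ge 0$ this limit is nonnegative, and by monotonicity the value of the difference quotient at $t=1$ dominates it, so $f(\signal{x}+\signal{d})-f(\signal{x})\ge 0$ for every $\signal{x}\in\real_+^N$. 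In particular $L\ge 0$, a fact I will use in the harder direction.

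For $f_\infty(\signal{d})\le L$, I would fix an arbitrary $\signal{x}_0\in\real_+^N$ and plug the curve $\signal{y}_t:=\signal{d}+\signal{x}_0/t$ (which converges to $\signal{d}$ as $t\to\infty$) into the definition of $f_\infty$. Since $t\,\signal{y}_t=\signal{x}_0+t\signal{d}$, this gives $f_\infty(\signal{d})\le\liminf_{t\to\infty}f(\signal{x}_0+t\signal{d})/t=\lim_{t\to\infty}f(\signal{x}_0+t\signal{d})/t$, and the monotonicity above bounds the right-hand side by its value at $t=1$, namely $f(\signal{x}_0+\signal{d})-f(\signal{x}_0)$. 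Taking the infimum over $\signal{x}_0\in\real_+^N$ yields $f_\infty(\signal{d})\le L$.

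The remaining direction $f_\infty(\signal{d})\ge L$ is where the difficulty lies, and the obstacle is exactly that the $\liminf$ defining $f_\infty$ runs over \emph{all} $\signal{y}\to\signal{d}$, not only over the ray $\signal{y}=\signal{d}$, so a purely one-dimensional argument along a fixed direction does not apply. The plan is to take an arbitrary sequence $(t_k,\signal{y}_k)$ with $t_k\to\infty$ and $\signal{y}_k\to\signal{d}$, to discard the indices with $t_k\signal{y}_k\notin\real_+^N$ (for these $f(t_k\signal{y}_k)=\infty$, so they cannot lower the $\liminf$), to set $\signal{x}_k:=t_k\signal{y}_k\in\real_+^N$, and to prove $\liminf_k f(\signal{x}_k)/t_k\ge L$; since $f_\infty(\signal{d})$ is the infimum of $\liminf_k f(t_k\signal{y}_k)/t_k$ over all such sequences, this is enough (the case $\signal{d}=\signal{0}$ being trivial, as then $L=0\le f_\infty(\signal{0})$). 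Fix $\epsilon\in\,]0,1[$. Coordinatewise convergence $\signal{y}_k\to\signal{d}$ ensures that, for $k$ large, $(\signal{y}_k)_i\ge(1-\epsilon)d_i$ on the support of $\signal{d}$, and hence $\signal{x}_k-m\signal{d}\ge 0$ for every integer $m$ with $0\le m\le(1-\epsilon)t_k$. Setting $m_k:=\lfloor(1-\epsilon)t_k\rfloor$ and telescoping the defining inequality of $L$ along $\signal{x}_k,\signal{x}_k-\signal{d},\dots,\signal{x}_k-m_k\signal{d}$ gives $f(\signal{x}_k)\ge f(\signal{x}_k-m_k\signal{d})+m_kL\ge m_kL$, where the last step uses $f\ge 0$. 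Dividing by $t_k$ and using $L\ge 0$ together with $m_k/t_k\to 1-\epsilon$ yields $\liminf_k f(\signal{x}_k)/t_k\ge(1-\epsilon)L$, and letting $\epsilon\downarrow 0$ completes the proof. I expect the bookkeeping that keeps $\signal{x}_k-j\signal{d}$ inside $\real_+^N$ for all $j\le m_k$ — and the choice of the buffer $1-\epsilon$ that absorbs the discrepancy between $\signal{y}_k$ and $\signal{d}$ — to be the only genuinely delicate point.
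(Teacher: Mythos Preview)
The paper omits its proof of this lemma (``because of the space limitation''), so there is nothing to compare against directly. Your argument is correct. The first inequality $f_\infty(\signal{d})\le L$ is handled cleanly, and in the harder direction your telescoping bound $f(\signal{x}_k)\ge m_k L$ together with $L\ge 0$, $f\ge 0$, and $m_k/t_k\to 1-\epsilon$ does give $\liminf_k f(\signal{x}_k)/t_k\ge L$ after letting $\epsilon\downarrow 0$. The bookkeeping that keeps $\signal{x}_k-j\signal{d}\in\real_+^N$ for $0\le j\le m_k$ is fine: on the support of $\signal{d}$ you use $(\signal{y}_k)_i\ge(1-\epsilon)d_i$, and off the support the $\signal{d}$-term vanishes while $(\signal{x}_k)_i\ge 0$ because you discarded indices with $t_k\signal{y}_k\notin\real_+^N$.

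That said, the second direction can be shortened considerably by invoking the analytical simplification the paper records immediately after Definition~\ref{def.amap}: for continuous concave mappings one has $f_\infty(\signal{d})=\lim_{t\to\infty}f(t\signal{d})/t$. With this in hand you do not need to control arbitrary sequences $(\signal{y}_k,t_k)$; it suffices to telescope along the single ray $\signal{x}=k\signal{d}$, $k\in\Natural$. Indeed, $f((k{+}1)\signal{d})-f(k\signal{d})\ge L$ for every $k$, hence $f(n\signal{d})\ge f(\signal{0})+nL\ge nL$, and dividing by $n$ and letting $n\to\infty$ gives $f_\infty(\signal{d})\ge L$. Your approach buys self-containment (it works straight from the $\liminf$ definition without citing the simplification); the ray argument buys brevity and avoids the $\epsilon$-buffer entirely.
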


We now introduce a slight generalization of the concept of asymptotic mappings given in \cite{renato17globalsip,renato2017performance}.
\begin{definition}
	\label{def.amap}
	(Asymptotic mappings:) Let $T:\real_+^N\to\real_+^N:\signal{x}\mapsto [T^{(1)}(\signal{x}),\cdots,T^{(N)}(\signal{x})]$ be a mapping such that, for each $i\in\{1,\ldots,N\}$, the function $T^{(i)}:\real^N\to\real_+\cup\{\infty\}$ is proper and  $\mathrm{dom}~T^{(i)}=\mathrm{dom}~T^{(i)}_\infty=\real_+^N$. For these mappings, we say that $T_\infty:\real_+^N\to\real_+^N:\signal{x}\mapsto[T^{(1)}_\infty(\signal{x}),\cdots,T^{(N)}_\infty(\signal{x})]$ is the \emph{asymptotic mapping} associated with $T$.
\end{definition}
If $T:\real^N_+\to\real^N_+$ is a continuous concave mapping, a standard interference mapping, or a convex mapping having an asymptotic mapping, then we can use the following analytical simplification to obtain the asymptotic mapping \cite{renato17globalsip,renato2017performance}\cite[Corollary~2.5.3]{aus03}: $(\forall\signal{x}\in\real_+^N)~T_\infty(\signal{x}) = \lim_{t\to\infty} (1/t)T(t\signal{x})$. The spectral radius $\rho(T_\infty)$ of a \emph{continuous} and \emph{monotonic} (see Definition~\ref{definition.mappings}(i)) asymptotic mapping $T_\infty$ is the value given by $\rho(T_\infty) := \sup\{\lambda\in\real_+~|~(\exists \signal{x}\in\real_+^N\backslash\{\signal{0}\})~ T_\infty(\signal{x})=\lambda\signal{x}\} \in \real_+$, and we recall that there always exists an {\it eigenvector} $\signal{x}\in\real_+^N$ satisfying $\rho(T_\infty)\signal{x}=T_\infty(\signal{x})$ \cite{nussbaum1986convexity}. The next fact is crucial to prove our main contributions.

\begin{fact}
	\label{fact.existence}
	\cite{renato17globalsip} Let $T:\real^N_+\to\real_{++}^N$ be a standard interference mapping. Then $\mathrm{Fix}(T)\neq\emptyset$ if and only if $\rho(T_\infty)<1$.
\end{fact}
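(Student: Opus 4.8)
The plan is to establish the two implications separately, using throughout that (for $T$ a standard interference mapping) the asymptotic mapping $T_\infty$ is continuous, monotone, positively homogeneous of degree one, satisfies $T_\infty(\signal{x})=\lim_{t\to\infty}(1/t)T(t\signal{x})$, and admits an eigenvector $\signal{v}\in\real_+^N\backslash\{\signal{0}\}$ with $T_\infty(\signal{v})=\rho(T_\infty)\signal{v}$. Two preparatory observations carry most of the weight. First, scalability makes $(0,\infty)\ni t\mapsto(1/t)T(t\signal{x})$ coordinate-wise strictly decreasing: for $0<s<t$ write $t\signal{x}=(t/s)(s\signal{x})$ and apply Definition~\ref{definition.mappings}(ii) at the point $s\signal{x}$ with factor $t/s>1$. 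Hence $T_\infty(\signal{x})=\inf_{t>0}(1/t)T(t\signal{x})$, and for $\signal{x}\neq\signal{0}$ the infimum is strictly below its value at $t=1$, giving the key strict inequality $T_\infty(\signal{x})<T(\signal{x})$ (coordinate-wise) for all $\signal{x}\in\real_+^N\backslash\{\signal{0}\}$. Second, I would record an elementary ``easy Collatz--Wielandt'' estimate: if $\signal{x}\in\real_{++}^N$ satisfies $T_\infty(\signal{x})\le\lambda\signal{x}$, then $\rho(T_\infty)\le\lambda$. The proof is a one-line comparison: for any eigenpair $(\signal{v},\mu)$ of $T_\infty$ put $s:=\max_i v_i/x_i>0$, so $s\signal{x}\ge\signal{v}$ with equality in some coordinate $i_0$ where $v_{i_0}=s x_{i_0}>0$; then $\mu\signal{v}=T_\infty(\signal{v})\le T_\infty(s\signal{x})=s\,T_\infty(\signal{x})\le\lambda s\signal{x}$, and the $i_0$-th coordinate gives $\mu\le\lambda$; taking the supremum over eigenvalues finishes it.

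For the implication $\mathrm{Fix}(T)\neq\emptyset\Rightarrow\rho(T_\infty)<1$, I would take $\signal{x}^\star\in\mathrm{Fix}(T)$; since $T$ maps into $\real_{++}^N$ we have $\signal{x}^\star=T(\signal{x}^\star)\in\real_{++}^N$, and the strict inequality above gives $T_\infty(\signal{x}^\star)<\signal{x}^\star$. Thus $\lambda:=\max_i T_\infty^{(i)}(\signal{x}^\star)/x_i^\star<1$ and $T_\infty(\signal{x}^\star)\le\lambda\signal{x}^\star$, so the easy Collatz--Wielandt estimate yields $\rho(T_\infty)\le\lambda<1$. This direction is short.

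For the converse $\rho(T_\infty)<1\Rightarrow\mathrm{Fix}(T)\neq\emptyset$, the strategy is to produce a ``super-solution'', i.e. a point $\signal{u}\in\real_{++}^N$ with $T(\signal{u})\le\signal{u}$, and then to run the recursion \refeq{eq.fpi} from $\signal{x}_1:=\signal{0}$: monotonicity (Definition~\ref{definition.mappings}(i)) makes $(\signal{x}_n)_{n\in\Natural}$ coordinate-wise nondecreasing, while $\signal{x}_1\le\signal{u}$ together with $T(\signal{u})\le\signal{u}$ keeps $\signal{x}_n\le\signal{u}$ for all $n$ by induction; a coordinate-wise nondecreasing bounded sequence converges, and continuity of $T$ makes its limit a fixed point. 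To obtain $\signal{u}$ it is enough to find $\signal{w}\in\real_{++}^N$ with $T_\infty(\signal{w})<\signal{w}$ coordinate-wise, because then $(1/t)T(t\signal{w})\to T_\infty(\signal{w})<\signal{w}$ and, since there are only finitely many coordinates, $T(t_0\signal{w})<t_0\signal{w}$ for $t_0$ large enough; set $\signal{u}:=t_0\signal{w}$. Producing such a strictly positive $\signal{w}$ is the main obstacle: the eigenvector $\signal{v}$ of $T_\infty$ need not be strictly positive, and because $T_\infty$ is only monotone and positively homogeneous (not subadditive) one cannot control $T_\infty(\signal{v}+\delta\signal{1})$ by a crude perturbation of $\signal{v}$.

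I would overcome this by perturbing the mapping rather than the eigenvector. For $\epsilon>0$ define $S_\epsilon:\real_+^N\to\real_+^N$ by $S_\epsilon(\signal{x}):=T_\infty(\signal{x})+\epsilon(\max_j x_j)\signal{1}$; this is continuous, monotone, positively homogeneous of degree one, and sends $\real_+^N\backslash\{\signal{0}\}$ into $\real_{++}^N$, so Perron--Frobenius/Krein--Rutman theory for such mappings \cite{nussbaum1986convexity} provides a strictly positive eigenvector $\signal{w}_\epsilon\in\real_{++}^N$, normalized by $\max_j(w_\epsilon)_j=1$, with eigenvalue $\mu_\epsilon\ge 0$. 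Since $S_\epsilon\ge T_\infty$ pointwise, the easy Collatz--Wielandt estimate gives $\mu_\epsilon\ge\rho(T_\infty)$. On the other hand, from $\mu_\epsilon\signal{w}_\epsilon=T_\infty(\signal{w}_\epsilon)+\epsilon\signal{1}$, continuity of $T_\infty$, and compactness of $\{\signal{z}\in\real_+^N~|~\max_j z_j=1\}$, the family $(\mu_\epsilon)$ is bounded and, along some sequence $\epsilon_k\downarrow 0$, $\signal{w}_{\epsilon_k}\to\signal{w}_0\in\real_+^N\backslash\{\signal{0}\}$ and $\mu_{\epsilon_k}\to\mu_0$ with $T_\infty(\signal{w}_0)=\mu_0\signal{w}_0$; hence $\mu_0\le\rho(T_\infty)$ by the definition of the spectral radius, so $\mu_0=\rho(T_\infty)<1$ and $\mu_{\epsilon_k}<1$ for all large $k$. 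Fixing such a $k$ and setting $\signal{w}:=\signal{w}_{\epsilon_k}$, $\gamma:=\mu_{\epsilon_k}\in[\rho(T_\infty),1)$, we get $T_\infty(\signal{w})\le S_{\epsilon_k}(\signal{w})=\gamma\signal{w}<\signal{w}$ coordinate-wise, which is exactly what was needed. The only external ingredient is the existence of a strictly positive eigenvector for strictly-positive-valued continuous monotone positively homogeneous maps; everything else reduces to compactness, continuity, and the two preparatory observations. (An alternative for this direction, if one prefers to avoid the perturbation, is to run \refeq{eq.fpi} from $\signal{0}$ and, assuming the nondecreasing iterates are unbounded, extract a normalized subsequential limit $\signal{y}^\star$ satisfying $T_\infty(\signal{y}^\star)\ge\signal{y}^\star$, contradicting $\rho(T_\infty)<1$ via the Collatz--Wielandt characterization of $\rho(T_\infty)$.)
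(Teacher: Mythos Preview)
The paper does not prove this statement; it is recorded as a \emph{Fact} with a citation to \cite{renato17globalsip}, so there is no proof in the paper to compare against. Your task, then, is really to supply a self-contained argument, and the one you give is correct.

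Your forward direction is essentially the same idea the authors themselves invoke later in the proof of Proposition~\ref{prop.convex_sim}: the strict inequality $T_\infty(\signal{x}^\star)<T(\signal{x}^\star)=\signal{x}^\star$ (which they attribute to \cite[Lemma~1(ii)]{renato2017performance}) combined with a Collatz--Wielandt-type bound. Your derivation of that strict inequality from scalability, via strict monotonicity of $t\mapsto(1/t)T(t\signal{x})$, is clean and avoids the external citation.

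For the converse, your perturbation $S_\epsilon(\signal{x})=T_\infty(\signal{x})+\epsilon\|\signal{x}\|_\infty\signal{1}$ is a nice device: it forces any nonzero eigenvector to be strictly positive (since $S_\epsilon$ maps $\real_+^N\backslash\{\signal{0}\}$ into $\real_{++}^N$ and a zero eigenvalue is impossible), and the compactness/continuity argument to push $\epsilon\downarrow 0$ is sound. The subsequent construction of a super-solution $\signal{u}=t_0\signal{w}$ and the monotone-bounded iteration from $\signal{0}$ are standard and correct. The alternative route you sketch at the end (unbounded iterates yield, after normalization, a vector $\signal{y}^\star$ with $T_\infty(\signal{y}^\star)\ge\signal{y}^\star$) is also viable and is closer in spirit to how such results are often proved in the nonlinear Perron--Frobenius literature; either approach would be acceptable here. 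One small point worth making explicit in a write-up is that continuity of $T_\infty$ is being assumed (the paper's definition of $\rho(T_\infty)$ already presupposes it), since for general standard interference mappings this requires a short argument.
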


\section{Convergence properties of the standard fixed point algorithm}

By \refeq{eq.conv_rate}, the sequence generated by \refeq{eq.fpi} with a $c$-contractive mapping $T$ has the desirable property of converging geometrically fast, and the convergence speed is directly related to the modulus of contraction $c$. Therefore, identifying contractive mappings and estimating their moduli of contraction are important tasks. In Sect.~\ref{sect.convex}, we prove that the spectral radii of asymptotic mappings can be used for these tasks if $T$ is convex. Then, in Sect.~\ref{sect.concave} we show that, if the fixed point algorithm in \refeq{eq.fpi} is used with an arbitrary (continuous) positive concave mapping $T$, then the spectral radius of $T_\infty$ provides us with information about the fastest convergence speed we can expect from the algorithm. All these results are especially useful if we can easily evaluate the spectral radii of arbitrary asymptotic mappings, so we start by showing in Sect.~\ref{sect.spec_radius} simple algorithms for this purpose. These algorithms also enable us to obtain information about an eigenvector associated with the spectral radius.

\subsection{Spectral radius of asymptotic mappings}
\label{sect.spec_radius}

Let $T_\infty:\real^N_+\to\real^N_+$ be a continuous asymptotic mapping associated with a continuous mapping $T:\real^N_+\to\real^N_+$ satisfying property (i) in Definition~\ref{definition.mappings}. It can be verified that $T_\infty$ also satisfies property (i). If $T_\infty$ is in addition concave and primitive, in the sense that\footnote{$T^m_\infty$ denotes the $m$-fold composition of $T_\infty$ with itself.} $(\forall \signal{x}\in\real^N_+\backslash\{\signal{0}\})(\exists p\in\Natural)(\forall m\ge p) T^m_\infty(\signal{x})>\signal{0},$
then the sequence $(\signal{x}_n)_{n\in\Natural}$ generated by 
\begin{align}
\label{eq.krause_iter}
\signal{x}_{n+1} = \dfrac{1}{\|T_\infty(\signal{x}_n)\|}T_\infty(\signal{x}_n), ~\signal{x}_1\in\real_{+}^N\backslash\{\signal{0}\}
\end{align}
with an arbitrary monotone norm $\|\cdot\|$ 
converges to a point $\signal{x}^\star\in\real_{++}^N$ such that $T_\infty(\signal{x}^\star)=\|T_\infty(\signal{x}^\star)\| ~ \signal{x}^\star$ and $\|\signal{x}^\star\| = 1$ \cite{krause1986perron,krause01}. Therefore, by \cite[Lemma~3.3]{nussbaum1986convexity}, we conclude that  $\rho(T_\infty)=\|T_\infty(\signal{x}^\star)\|$. In practical terms, the iteration in \refeq{eq.krause_iter} is a simple algorithm to compute the spectral radius and a corresponding eigenvector of an asymptotic mapping, provided that the assumptions mentioned above are valid. In more challenging cases in which existing results such as those in \cite{krause1986perron,krause01} does not necessarily guarantee convergence of \refeq{eq.krause_iter} to a point $\signal{x}^\star$ satisfying $\rho(T_\infty)=\|T_\infty(\signal{x}^\star)\|$, we propose an approach based on the following result (the proof is omitted because of the space limitation):
\begin{proposition}
	\label{prop.alg_spec}
	 Let $T_\infty:\real^N_{+}\to\real_{+}^N$ be a continuous asymptotic mapping satisfying the monotonicity property in Definition~\ref{definition.mappings}, and consider the mapping  $T_\epsilon:\real^N_{+}\to\real_{+}^N:\signal{x}\mapsto T_\infty(\signal{x})+\epsilon\signal{1}$, where $\signal{1}\in\real^N$ denotes the vector of ones and $\epsilon>0$ is arbitrary. For a given parameter $p>0$, let the sequence $(\signal{x}_{p,n})_{n\in\Natural}$ be generated by 
\begin{align}
\label{eq.epsilon_iter}
\signal{x}_{p,n} := \dfrac{p}{\|T_\epsilon(\signal{x}_{p,n})\|} T_\epsilon(\signal{x}_{p,n}),
\end{align}	
where $\signal{x}_{p,1}\in\real_{+}^N$ is arbitrary, and $\|\cdot\|$ is a monotone norm. Then we have the following:
\begin{itemize}
\item[(i)] $T_\epsilon$ is a standard interference mapping.
\item[(ii)] For every $p>0$, the sequence $(\signal{x}_{p,n})_{n\in\Natural}$ converges to a point $\signal{x}_p^\star\in\real_{++}^N$ satisfying $T_\epsilon(\signal{x}_p^\star)=(\|T_\epsilon(\signal{x}_p^\star)\|/p) ~ \signal{x}_p^\star$ and $\|\signal{x}_p^\star\| = p$.
\item[(iii)] $(\forall p>0)~ \rho(T_\infty)\le \|T_\epsilon(\signal{x}_p^\star)\|/p $ 
\item[(iv)] $\lim_{p\to \infty} \|T_\epsilon(\signal{x}_p^\star)\|/p = \rho(T_\infty)$
\item[(v)] If $(p_n)\subset\real_{++}$ is a  sequence satisfying $\lim_{n\to\infty}p_n=\infty$, then any accumulation point of $(\signal{x}^\star_{p_n})_{n\in\Natural}$ is an eigenvector of $T_\infty$ associated with the eigenvalue $\rho(T_\infty)$.
\end{itemize}

\end{proposition}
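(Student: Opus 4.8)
The plan is to establish the five items in order, using the theory of standard interference mappings (Yates) and Fact~\ref{fact.existence} as the central tools, together with the characterization of $\rho(T_\infty)$ via eigenvectors.

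\textbf{Item (i).} I would verify directly that $T_\epsilon = T_\infty + \epsilon\signal{1}$ satisfies monotonicity and scalability. Monotonicity is inherited from $T_\infty$ (which is monotonic by hypothesis) since adding the constant $\epsilon\signal{1}$ preserves inequalities. For scalability, note that $T_\infty$ is positively homogeneous (each coordinate $T^{(i)}_\infty$ is positively homogeneous by the remark after Fact~\ref{fact.sup_conv}), so for $\alpha>1$ we get $\alpha T_\epsilon(\signal{x}) = \alpha T_\infty(\signal{x}) + \alpha\epsilon\signal{1} = T_\infty(\alpha\signal{x}) + \alpha\epsilon\signal{1} > T_\infty(\alpha\signal{x}) + \epsilon\signal{1} = T_\epsilon(\alpha\signal{x})$, where the strict inequality holds coordinate-wise because $\alpha>1$ and $\epsilon>0$. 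Also $T_\epsilon$ maps into $\real_{++}^N$ because of the $\epsilon\signal{1}$ term, and it is continuous. Hence $T_\epsilon$ is a standard interference mapping.

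\textbf{Items (ii) and (iii).} For fixed $p>0$, consider the rescaled mapping $\signal{x}\mapsto (p/\|T_\epsilon(\signal{x})\|)\,T_\epsilon(\signal{x})$ appearing on the right-hand side of \refeq{eq.epsilon_iter}. The cleaner route is to observe that $(\signal{x}_{p,n})$ is a fixed-point iteration for a mapping whose fixed points correspond to normalized eigenvectors; more precisely, I would show that finding $\signal{x}_p^\star$ with $T_\epsilon(\signal{x}_p^\star) = \mu \signal{x}_p^\star$ and $\|\signal{x}_p^\star\|=p$ is equivalent to finding a fixed point of a suitable scaled standard interference mapping. A direct approach: define $S_p:\signal{x}\mapsto (p/\|T_\epsilon(\signal{x})\|)T_\epsilon(\signal{x})$ on $\real_+^N\setminus\{\signal{0}\}$; since $\|\cdot\|$ is monotone and $T_\epsilon$ is a standard interference mapping, $S_p$ inherits monotonicity and scalability on the normalized slice, and one shows the iteration converges to the unique fixed point $\signal{x}_p^\star$, which is strictly positive because $T_\epsilon$ has range in $\real_{++}^N$. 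Writing $\mu_p := \|T_\epsilon(\signal{x}_p^\star)\|/p$, the fixed-point equation gives $T_\epsilon(\signal{x}_p^\star) = \mu_p\,\signal{x}_p^\star$ and $\|\signal{x}_p^\star\|=p$, proving (ii). For (iii), from $T_\epsilon(\signal{x}_p^\star) = T_\infty(\signal{x}_p^\star)+\epsilon\signal{1} > T_\infty(\signal{x}_p^\star)$ we get $\mu_p \signal{x}_p^\star > T_\infty(\signal{x}_p^\star)$; by the monotone-norm / subinvariance characterization of the spectral radius of a monotone positively homogeneous mapping (if $T_\infty(\signal{x}) \le \lambda\signal{x}$ fails strictly in the wrong direction one can compare with an eigenvector), one concludes $\rho(T_\infty) < \mu_p$, hence in particular $\rho(T_\infty)\le \mu_p = \|T_\epsilon(\signal{x}_p^\star)\|/p$. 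Alternatively, and perhaps more robustly, apply Fact~\ref{fact.existence}: if $\mu_p \le \rho(T_\infty)$ one could rescale to produce a fixed point of the standard interference mapping $(1/\mu_p')T_\epsilon$ for suitable $\mu_p'$ while $\rho$ of its asymptotic part is $\ge 1$, a contradiction.

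\textbf{Items (iv) and (v).} By (iii) the numbers $\mu_p = \|T_\epsilon(\signal{x}_p^\star)\|/p$ are bounded below by $\rho(T_\infty)$; I would show they are also bounded above (uniformly in large $p$) and decreasing, or at least that $\limsup_{p\to\infty}\mu_p \le \rho(T_\infty)$. The key estimate: $T_\epsilon(\signal{x}_p^\star) = T_\infty(\signal{x}_p^\star) + \epsilon\signal{1}$, so $\mu_p = \rho_p + \epsilon\|\signal{1}\|\cdot(\text{correction})/p$ heuristically — as $p\to\infty$ the normalized eigenvector $\signal{x}_p^\star/p$ lives on the unit slice and the additive $\epsilon\signal{1}$ becomes negligible relative to the homogeneous term $T_\infty(\signal{x}_p^\star)$ of size $\sim p$. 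Making this precise: normalize $\signal{y}_p := \signal{x}_p^\star/p$, so $\|\signal{y}_p\|=1$ and $T_\infty(\signal{y}_p) + (\epsilon/p)\signal{1} = \mu_p \signal{y}_p$ by positive homogeneity of $T_\infty$. Any accumulation point $\signal{y}^\star$ of $(\signal{y}_{p_n})$ (which exists by compactness of the unit slice) satisfies, by continuity of $T_\infty$ and letting $p_n\to\infty$, the equation $T_\infty(\signal{y}^\star) = \mu^\star \signal{y}^\star$ with $\mu^\star = \lim \mu_{p_n}$ and $\|\signal{y}^\star\|=1$, so $\signal{y}^\star$ is an eigenvector of $T_\infty$; combined with $\mu_{p_n}\ge\rho(T_\infty)$ and the fact that any eigenvalue with a nonnegative eigenvector is $\le\rho(T_\infty)$ (by definition of $\rho$ as the supremum), we get $\mu^\star = \rho(T_\infty)$. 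Since every subsequential limit of $(\mu_p)$ equals $\rho(T_\infty)$ and the sequence is bounded, $\lim_{p\to\infty}\mu_p = \rho(T_\infty)$, which is (iv); and the accumulation-point statement just proved is exactly (v).

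\textbf{Main obstacle.} The delicate part is item (ii): establishing convergence of the normalized iteration \refeq{eq.epsilon_iter} and well-definedness/uniqueness of $\signal{x}_p^\star$. The normalization factor $p/\|T_\epsilon(\signal{x})\|$ is itself $\signal{x}$-dependent, so $S_p$ is not literally a standard interference mapping, and one must argue convergence either by a dedicated contraction/monotonicity argument on the unit slice (in the spirit of Krause's nonlinear Perron--Frobenius theory, as invoked around \refeq{eq.krause_iter}) or by reducing to Yates' framework via an auxiliary construction. A secondary subtlety is the strict inequality $\rho(T_\infty) < \mu_p$ in (iii) versus the weaker $\le$; only $\le$ is claimed, so it suffices to rule out $\mu_p < \rho(T_\infty)$, which follows cleanly from Fact~\ref{fact.existence} applied to the standard interference mapping $T_\epsilon$ after rescaling, since a fixed point of $(1/\mu_p)T_\epsilon$ exists while $\rho$ of its asymptotic mapping $(1/\mu_p)T_\infty$ would be $\rho(T_\infty)/\mu_p \ge 1$.
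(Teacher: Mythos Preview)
The paper does not contain a proof of this proposition: it explicitly states that ``the proof is omitted because of the space limitation.'' There is therefore no paper proof to compare against, so I comment on your proposal directly.

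Items (i), (iii), (iv), and (v) are correct as sketched. Your Fact~\ref{fact.existence} argument for (iii) is clean: if $\mu_p<\rho(T_\infty)$ then $(1/\mu_p)T_\epsilon$ is a standard interference mapping with asymptotic mapping $(1/\mu_p)T_\infty$ of spectral radius $\rho(T_\infty)/\mu_p>1$, so it can have no fixed point, contradicting $T_\epsilon(\signal{x}_p^\star)=\mu_p\signal{x}_p^\star$. For (iv)--(v), the normalization $\signal{y}_p:=\signal{x}_p^\star/p$ gives $T_\infty(\signal{y}_p)+(\epsilon/p)\signal{1}=\mu_p\signal{y}_p$, and boundedness of $(\mu_p)$ follows from continuity of $T_\infty$ on the compact set $\{\signal{y}\in\real_+^N:\|\signal{y}\|=1\}$; the subsequential-limit argument then goes through exactly as you describe.

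For item (ii) your identification of the obstacle is right, but one step in your sketch is wrong: the claim that ``$S_p$ inherits monotonicity and scalability on the normalized slice'' fails. If $\signal{x}\le\signal{y}$ then $T_\epsilon(\signal{x})\le T_\epsilon(\signal{y})$ and, by monotonicity of the norm, $\|T_\epsilon(\signal{x})\|\le\|T_\epsilon(\signal{y})\|$; hence the normalization factor $p/\|T_\epsilon(\cdot)\|$ moves in the \emph{opposite} direction, and $S_p$ is in general not monotone. So $S_p$ cannot be treated as a standard interference mapping in Yates' sense. The convergence of \refeq{eq.epsilon_iter} must instead be obtained from a genuine nonlinear Perron--Frobenius / conditional-eigenvalue result for standard interference mappings (of the type cited in the paper around \refeq{eq.krause_iter} and in the max-min references), which guarantees existence and uniqueness of the normalized eigenpair and convergence of the normalized power iteration. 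Once that external result is invoked, the rest of your outline stands.
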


In simple terms, Proposition~\ref{prop.alg_spec}(iii)-(iv) shows that the spectral radius $\rho(T_\infty)$ of any asymptotic mapping $T_\infty$ that is monotonic and continuous can be estimated with any arbitrary precision by using \refeq{eq.epsilon_iter}. Informally, given an arbitrary scalar $\epsilon>0$, if $n\in\Natural$ and ${p}\in\real_{++}$ are sufficiently large, then $\rho(T_\infty)\approx \|T_\epsilon(\signal{x}_{p,n})\|/p$, where $(\signal{x}_{p,n})_{n\in\Natural}$ is the sequence generated by \refeq{eq.epsilon_iter}. Furthermore, by assuming that  $(\signal{x}^\star_{p_n})_{n\in\Natural}$ in Proposition~\ref{prop.alg_spec}(v) converges, then $\signal{x}_{p,n}$ with the above parameters is an approximation of an eigenvector of $T_\infty$ associated with the spectral radius $\rho(T_\infty)$.

\subsection{Convex mappings}
\label{sect.convex}
Checking whether a continuous and monotonic mapping $T:\real_+^N\to\real_{++}^N$ is $c$-contractive may be challenging because proving the existence of a tuple $(\signal{v},~c)\in\real_{++}^N\times~[0,1[$ with the property in Definition~\ref{definition.mappings}(iii) may be difficult. However, as we show in the next proposition, if $T$ is convex (as common in many robust wireless resource allocation problems \cite{boche2008,martin2006robust}), then knowledge of the spectral radius of $T_\infty$, assuming that $T_\infty$ exists, can be used to determine whether $T$ is contractive. 

\begin{proposition}
\label{prop.modu_contract}
	Let $T:\real_+^N\to\real_{++}^N$ be a continuous convex mapping that has an associated continuous asymptotic mapping $T_\infty:\real_+^N\to\real_{+}^N$. Further assume the following: (i) there exists a (strictly) positive vector $\signal{v}\in\real_{++}^N$ such that  $T_\infty(\signal{v})=\rho(T_\infty)\signal{v}$, (ii) $T$ satisfies the monotonicity property in Definition~\ref{definition.mappings}, and (iii) $\rho(T_\infty)<1$. Then $T$ is $c$-contractive for any $c \in [\rho(T_\infty),1[~\subset~[0,1[$.
\end{proposition}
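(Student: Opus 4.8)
\emph{Proof proposal.} The plan is to verify directly that $T$ satisfies the contractivity property of Definition~\ref{definition.mappings}(iii) with the eigenvector $\signal{v}$ supplied by hypothesis~(i) and any modulus $c\in[\rho(T_\infty),1[$; note this interval is nonempty and contained in $[0,1[$ because $\rho(T_\infty)<1$ by hypothesis~(iii). Fix a coordinate index $i\in\{1,\ldots,N\}$ and regard $T^{(i)}$ as a map $\real^N\to\real\cup\{\infty\}$ that equals $+\infty$ outside $\real_+^N$. First I would check that this extension fits the assumptions of Fact~\ref{fact.sup_conv}: it is proper (its domain $\real_+^N$ is nonempty and it never takes the value $-\infty$, since $T$ maps into $\real_{++}^N$), lower semicontinuous (it is continuous on the closed set $\real_+^N$ and $+\infty$ off it), and convex (each coordinate of the convex mapping $T$ is convex on the convex set $\real_+^N$). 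These checks are routine, but they are precisely what licenses the use of Fact~\ref{fact.sup_conv}.

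With that in place, apply Fact~\ref{fact.sup_conv} with $\signal{d}=\epsilon\signal{v}$ for $\epsilon>0$: for every $\signal{x}\in\real_+^N$,
\[
T^{(i)}(\signal{x}+\epsilon\signal{v})-T^{(i)}(\signal{x})\ \le\ \sup\{T^{(i)}(\signal{y}+\epsilon\signal{v})-T^{(i)}(\signal{y})~|~\signal{y}\in\real_+^N\}\ =\ T^{(i)}_\infty(\epsilon\signal{v}).
\]
Then I would invoke positive homogeneity of the asymptotic function together with the eigenvector equation $T_\infty(\signal{v})=\rho(T_\infty)\signal{v}$ from hypothesis~(i) to rewrite the right-hand side as $\epsilon\,T^{(i)}_\infty(\signal{v})=\epsilon\,\rho(T_\infty)\,v_i\le \epsilon\,c\,v_i$, the last inequality using $c\ge\rho(T_\infty)$. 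Assembling the $N$ resulting coordinate inequalities gives $T(\signal{x}+\epsilon\signal{v})\le T(\signal{x})+c\epsilon\signal{v}$ for all $\signal{x}\in\real_+^N$ and all $\epsilon>0$, which is exactly property~(iii) of Definition~\ref{definition.mappings}. Since $\signal{v}\in\real_{++}^N$ and monotonicity (property~(i) of Definition~\ref{definition.mappings}) holds by hypothesis~(ii), $T$ is a $c$-contractive interference mapping, and $c$ was an arbitrary element of $[\rho(T_\infty),1[$.

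The only delicate point — the ``hard part'', modest as it is — is the justification for applying Fact~\ref{fact.sup_conv}: one must confirm that extending $T^{(i)}$ by $+\infty$ off $\real_+^N$ preserves lower semicontinuity and convexity, and that $\signal{v}$ (hence $\epsilon\signal{v}$) lies in $\mathrm{dom}\,T^{(i)}_\infty=\real_+^N$ so that $T^{(i)}_\infty(\signal{v})$ is a genuine real number equal to $\rho(T_\infty)v_i$. All of these follow immediately from the standing hypotheses ($T$ continuous on $\real_+^N$, $\signal{v}\in\real_{++}^N$, and $T_\infty$ real-valued on $\real_+^N$). Everything after the invocation of Fact~\ref{fact.sup_conv} is a one-line consequence of homogeneity and the eigenvalue equation.
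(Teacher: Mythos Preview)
Your proposal is correct and follows essentially the same route as the paper's own proof: apply Fact~\ref{fact.sup_conv} with $\signal{d}=\epsilon\signal{v}$ to obtain $T(\signal{x}+\epsilon\signal{v})\le T(\signal{x})+T_\infty(\epsilon\signal{v})$, then use positive homogeneity and the eigenvector equation to rewrite $T_\infty(\epsilon\signal{v})=\rho(T_\infty)\epsilon\signal{v}$. The only difference is that you spell out more carefully the properness, lower semicontinuity, and convexity of the extended coordinate functions before invoking Fact~\ref{fact.sup_conv}, which the paper leaves implicit.
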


\begin{proof}
	We only have to show that the property in Definition~\ref{definition.mappings}(iii) can be satisfied with $c=\rho(T_\infty)\ge0$. To this end, let $\signal{v}\in\real^N_{++}$ be a vector with the property in assumption (i).  Now, by Fact~\ref{fact.sup_conv} and positive homogeneity of asymptotic mappings, we deduce  
	$(\forall \epsilon>0)(\forall\signal{x}\in\real^N_+) ~ T(\signal{x}+\epsilon \signal{v}) \le T(\signal{x})+ T_\infty(\epsilon\signal{v})=T(\signal{x})+\rho(T_\infty)\epsilon\signal{v}$, and the claim follows. 
\end{proof}

We note that there are many simple results to  verify assumption (i) in Proposition~\ref{prop.modu_contract} without explicitly computing a so-called (nonlinear) eigenvector $\signal{v}\in\real_{++}$ \cite{gau04}. In addition, neither assumption (ii) nor assumption (iii) can be dropped. The former is required because of the definition of contractive mappings, and the latter is also necessary because, as shown below, the spectral radius of the asymptotic mapping $T_\infty$ associated with a $c$-contractive mapping $T$ is a lower bound for the modulus of contraction $c\in [0,1[$.

\begin{proposition}
	\label{prop.inf_cont}
	Let $T:\real_+^N\to\real_{++}^N$ be $c$-contractive and convex. Then $T$ has a continuous asymptotic mapping $T_\infty:\real_{+}^N\to\real_+^N$ satisfying $\rho(T_\infty)\le c\in~[0,1[$.
\end{proposition}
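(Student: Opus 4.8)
The plan is to proceed in three stages: (a) show that $T$ admits a well-defined asymptotic mapping $T_\infty$ in the sense of Definition~\ref{def.amap}; (b) show that $T_\infty$ is continuous; (c) extract the bound $\rho(T_\infty)\le c$ from the contractivity inequality. Throughout I use that a $c$-contractive mapping is, by definition, monotone [Definition~\ref{definition.mappings}(i)] and continuous.

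For (a), I would extend each coordinate function $T^{(i)}:\real_+^N\to\real_{++}$ to $\real^N$ by declaring $T^{(i)}(\signal{x}):=\infty$ for $\signal{x}\notin\real_+^N$. Since $T^{(i)}$ is continuous, convex and positive on the closed convex set $\real_+^N$, this extension is proper, lower semicontinuous and convex with $\mathrm{dom}\,T^{(i)}=\real_+^N$, so Fact~\ref{fact.sup_conv} gives $T^{(i)}_\infty(\signal{d})=\sup\{T^{(i)}(\signal{x}+\signal{d})-T^{(i)}(\signal{x})\mid\signal{x}\in\real_+^N\}$. Now fix $\signal{d}\in\real_+^N$ and set $\beta:=\max_i d_i/v_i\ge 0$, where $\signal{v}\in\real_{++}^N$ is the vector from Definition~\ref{definition.mappings}(iii), so that $\signal{0}\le\signal{d}\le\beta\signal{v}$. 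For every $\signal{x}\in\real_+^N$, monotonicity gives $T(\signal{x})\le T(\signal{x}+\signal{d})\le T(\signal{x}+\beta\signal{v})$ and contractivity (applied with $\epsilon=\beta$; the case $\beta=0$, i.e. $\signal{d}=\signal{0}$, being trivial) gives $T(\signal{x}+\beta\signal{v})\le T(\signal{x})+c\beta\signal{v}$, whence $\signal{0}\le T(\signal{x}+\signal{d})-T(\signal{x})\le c\beta\signal{v}$. Taking suprema coordinatewise yields $\signal{0}\le T_\infty(\signal{d})\le c\beta\signal{v}$, so $\real_+^N\subseteq\mathrm{dom}\,T^{(i)}_\infty$ and $T^{(i)}_\infty$ is nonnegative and finite on $\real_+^N$; conversely, for $\signal{d}\notin\real_+^N$ the choice $\signal{x}=\signal{0}$ in the supremum gives $T^{(i)}_\infty(\signal{d})=\infty$ since $T^{(i)}(\signal{d})=\infty>T^{(i)}(\signal{0})$. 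Hence $\mathrm{dom}\,T^{(i)}=\mathrm{dom}\,T^{(i)}_\infty=\real_+^N$ for each $i$, and by Definition~\ref{def.amap} the asymptotic mapping $T_\infty:\real_+^N\to\real_+^N$ exists.

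For (b), each $T^{(i)}_\infty$ is positively homogeneous, convex (a supremum of convex functions of $\signal{d}$ by the formula above), hence subadditive, and monotone (inherited from $T$, as noted in Sect.~\ref{sect.spec_radius}). Together with finiteness these properties give, for all $\signal{d},\signal{d}'\in\real_+^N$, $T^{(i)}_\infty(\signal{d})\le T^{(i)}_\infty(\signal{d}'+(\signal{d}-\signal{d}')_+)\le T^{(i)}_\infty(\signal{d}')+\|\signal{d}-\signal{d}'\|_\infty\,T^{(i)}_\infty(\signal{1})$, and symmetrically, so $T_\infty$ is Lipschitz with respect to $\|\cdot\|_\infty$ and in particular continuous.

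For (c), specializing the contractivity bound to $T(\signal{x}+\epsilon\signal{v})-T(\signal{x})\le c\epsilon\signal{v}$ and using positive homogeneity gives $T_\infty(\signal{v})\le c\signal{v}$. Since $T_\infty$ is continuous and monotone, it has an eigenvector $\signal{x}^\star\in\real_+^N\setminus\{\signal{0}\}$ with $T_\infty(\signal{x}^\star)=\rho(T_\infty)\signal{x}^\star$; letting $\beta^\star:=\max_i x^\star_i/v_i>0$ and choosing $k$ to attain this maximum, so that $x^\star_k=\beta^\star v_k>0$ and $\signal{x}^\star\le\beta^\star\signal{v}$, monotonicity and homogeneity give $\rho(T_\infty)\signal{x}^\star=T_\infty(\signal{x}^\star)\le T_\infty(\beta^\star\signal{v})=\beta^\star T_\infty(\signal{v})\le c\beta^\star\signal{v}$; comparing $k$-th coordinates and dividing by $x^\star_k>0$ yields $\rho(T_\infty)\le c\in[0,1[$. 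I expect step (a) to be the main obstacle: confirming that $T_\infty$ is finite on all of $\real_+^N$, so that Definition~\ref{def.amap} genuinely applies, is exactly where contractivity (through the dominating positive vector $\signal{v}$) is indispensable, since the asymptotic function of a merely convex mapping can take the value $+\infty$.
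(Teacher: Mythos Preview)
Your proof is correct and follows essentially the same route as the paper: both use Fact~\ref{fact.sup_conv} to express $T_\infty$ as a coordinatewise supremum, both exploit contractivity and the positive vector $\signal{v}$ to show this supremum is finite on $\real_+^N$, and both derive $T_\infty(\signal{v})\le c\signal{v}$ to conclude $\rho(T_\infty)\le c$. You supply more detail than the paper, which omits the continuity argument for space and obtains the spectral radius bound by citing \cite[Lemma~3.3]{nussbaum1986convexity} rather than via your explicit eigenvector comparison.
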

\begin{proof}
	 By definition, if $T$ is $c$-contractive, there exists $(\signal{v},c)\in\real_{++}^N\times [0,1[$ such that
	\begin{align}
	\label{eq.mod_c}
	(\forall \epsilon>0)(\forall \signal{x}\in\real_+^N) T(\signal{x}+\epsilon\signal{v}) - T(\signal{x}) \le c\epsilon \signal{v}. 
	\end{align}
	Denote by $T^{(i)}:\real_+^N\to\real_{++}^N$ the $i$th coordinate function of the mapping $T$; i.e., $(\forall\signal{x}\in\real_+^N) [T^{(1)}(\signal{x}),\cdots,T^{(N)}(\signal{x})]:=T(\signal{x})$. Since $T$ is continuous and convex, by Fact~\ref{fact.sup_conv} we have
	\begin{multline}
	\label{eq.coord_sup}
	(\forall \signal{d}\in\real_+^N)(\forall i\in\{1,\cdots,N\})\\~T^{(i)}_\infty(\signal{d}) := \sup\{T^{(i)}(\signal{x}+\signal{d})-T^{(i)}(\signal{x})~|~\signal{x}\in\real_+^N\}.
	\end{multline}
	To prove that $T_\infty(\signal{x}):= [T^{(1)}_\infty(\signal{x}),\cdots,T^{(N)}_\infty(\signal{x})]\ge\signal{0}$ for $\signal{x}\in\real_+^N$ is the asymptotic \emph{mapping} associated with $T$ (in the sense of Definition~\ref{def.amap}), we need to show that $\mathrm{dom}~T^{(i)}_\infty=\real_+^N$ for each $i\in\{1,\ldots, N\}$
	. To this end, take the coordinatewise supremum in \refeq{eq.mod_c} over $\signal{x}\in\real_+^N$ and apply \refeq{eq.coord_sup} with $\signal{d}=\epsilon \signal{v}$ and an arbitrary $\epsilon>0$ to obtain 
	\begin{align}
	\label{eq.local_ineq}
	 [T^{(1)}_\infty(\epsilon \signal{v}),\cdots,T^{(N)}_\infty(\epsilon\signal{v})]=:T_\infty(\epsilon\signal{v})\le c \epsilon\signal{v}.
	\end{align}

	By positivity of $\signal{v}$, for an arbitrary $\signal{x}\in\real_+^N$, there exists $\eta>0$ such that $\signal{x}\le \eta\signal{v}$. Since $\epsilon>0$ in \refeq{eq.local_ineq} can be chosen arbitrarily, we can use $\eta = \epsilon$ and monotonicity of $T$ to deduce $T_\infty(\signal{x}) \le T_\infty(\eta \signal{v})\le c \eta \signal{v}\in\real_+^N$. As a result, we have $T^{(i)}_\infty(\signal{x})<\infty$ for all $\signal{x}\in\real_+^N$ and all $i\in\{1,\ldots,N\}$ as claimed. {(We can also show that $T_\infty$ is continuous in $\real_+^N$, but we omit the proof because of the space limitation.)} With the inequality in \refeq{eq.local_ineq} and continuity of $T_\infty$, we also obtain $\rho(T_\infty) \le c$ by  \cite[Lemma~3.3]{nussbaum1986convexity}, and the proof is complete.
\end{proof}

We now show a useful relation between contractive and standard interference mappings. From a practical perspective, the next result and \refeq{eq.conv_rate} reveal that many existing iterative algorithms for power control in wireless networks converge geometrically fast. Furthermore, the inequality in \refeq{eq.conv_rate}, Proposition~\ref{prop.modu_contract}, and Proposition~\ref{prop.inf_cont} show that the concept of spectral radius of asymptotic mappings provides us with information about the convergence speed of these algorithms.
\begin{proposition}
	\label{prop.convex_sim}
	Let $T:\real_+^N\to\real_{++}^N$ be a convex standard interference mapping. Then $\rho(T_\infty)<1$ is a sufficient and necessary condition for $T$ to be contractive.
\end{proposition}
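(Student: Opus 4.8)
The plan is to establish the two implications separately. The direction ``$T$ contractive $\Rightarrow \rho(T_\infty)<1$'' is essentially free: a contractive $T$ is $c$-contractive for some $c\in[0,1[$, and since $T$ is also convex by hypothesis, Proposition~\ref{prop.inf_cont} already yields that $T$ admits a continuous asymptotic mapping with $\rho(T_\infty)\le c<1$. No further work is needed here.

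The substance lies in the converse, ``$\rho(T_\infty)<1 \Rightarrow T$ contractive''. First I would note that, since $T$ is a standard interference mapping, scalability makes $t\mapsto(1/t)T(t\signal{x})$ coordinatewise nonincreasing and bounded below by $\signal{0}$, so $T_\infty(\signal{x})=\lim_{t\to\infty}(1/t)T(t\signal{x})$ exists and is monotone; it is also continuous (it is upper semicontinuous as a pointwise infimum of continuous maps and lower semicontinuous as an asymptotic function), so $\rho(T_\infty)$ is well defined. Applying Fact~\ref{fact.existence} to the standard interference mapping $T$, the hypothesis $\rho(T_\infty)<1$ gives $\mathrm{Fix}(T)\neq\emptyset$; pick $\signal{x}^\star\in\mathrm{Fix}(T)$ and observe that $\signal{x}^\star=T(\signal{x}^\star)\in\real_{++}^N$, so $\signal{x}^\star$ is strictly positive. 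The crucial point is that (strict) scalability forces $T_\infty$ to sit strictly below $T$ at $\signal{x}^\star$: scalability at the point $2\signal{x}^\star$ gives $(1/t)T(t\signal{x}^\star)<(1/2)T(2\signal{x}^\star)$ for all $t>2$, and scalability at $\signal{x}^\star$ with $\alpha=2$ gives $(1/2)T(2\signal{x}^\star)<T(\signal{x}^\star)$; letting $t\to\infty$ yields $T_\infty(\signal{x}^\star)\le(1/2)T(2\signal{x}^\star)<T(\signal{x}^\star)=\signal{x}^\star$. Since $\signal{x}^\star\in\real_{++}^N$ and $T_\infty(\signal{x}^\star)\ge\signal{0}$, I can set $c:=\max_i T^{(i)}_\infty(\signal{x}^\star)/x^\star_i\in[0,1[$, so that $T_\infty(\signal{x}^\star)\le c\,\signal{x}^\star$.

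It then remains to promote this pointwise estimate to contractivity, and here convexity enters. Each coordinate $T^{(i)}$ is proper, lower semicontinuous, and convex with domain $\real_+^N$, so Fact~\ref{fact.sup_conv} gives $T(\signal{x}+\signal{d})\le T(\signal{x})+T_\infty(\signal{d})$ for all $\signal{x},\signal{d}\in\real_+^N$. Taking $\signal{d}=\epsilon\signal{x}^\star$ with $\epsilon>0$ and using positive homogeneity of $T_\infty$ yields $T(\signal{x}+\epsilon\signal{x}^\star)\le T(\signal{x})+\epsilon\,T_\infty(\signal{x}^\star)\le T(\signal{x})+c\epsilon\,\signal{x}^\star$; that is, Definition~\ref{definition.mappings}(iii) holds with the tuple $(\signal{v},c)=(\signal{x}^\star,c)\in\real_{++}^N\times[0,1[$. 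Since $T$, being a standard interference mapping, already satisfies the monotonicity property~(i), we conclude that $T$ is $c$-contractive, completing the converse. (One could instead invoke Proposition~\ref{prop.modu_contract}, but its assumption (i) would force us to exhibit a strictly positive eigenvector of $T_\infty$; the route through $\mathrm{Fix}(T)$ bypasses that.)

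The one delicate step I anticipate is precisely the ingredient Proposition~\ref{prop.modu_contract} takes as a hypothesis: producing a strictly positive $\signal{v}$ with $T_\infty(\signal{v})\le c\signal{v}$ for some $c<1$. Rather than obtaining such a $\signal{v}$ through nonlinear Perron--Frobenius arguments, the plan reads it off for free from the automatically strictly positive fixed point of $T$ together with \emph{strict} scalability; everything else is routine manipulation of positive homogeneity and of the sup-characterization of asymptotic functions of convex functions.
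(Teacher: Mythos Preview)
Your proof is correct and follows essentially the same route as the paper's: for sufficiency you use Fact~\ref{fact.existence} to obtain the strictly positive fixed point $\signal{x}^\star$, derive $T_\infty(\signal{x}^\star)<\signal{x}^\star$ from scalability (the paper cites an external lemma for this, whereas you spell out the two-step scalability argument), and then invoke Fact~\ref{fact.sup_conv} to conclude contractivity with $\signal{v}=\signal{x}^\star$. The only cosmetic difference is in the necessity direction, where you appeal to Proposition~\ref{prop.inf_cont} while the paper instead argues that $\rho(T_\infty)\ge 1$ would force $\mathrm{Fix}(T)=\emptyset$ via Fact~\ref{fact.existence}; both are immediate.
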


\begin{proof}
	By Fact~\ref{fact.existence}, if $\rho(T_\infty)\ge 1$, then $\mathrm{Fix}(T)=\emptyset$, so $T$ cannot be contractive because contractive mappings have a fixed point \cite{fey2012}. Therefore, $\rho(T_\infty)<1$ is a necessary condition. To prove sufficiency, we only need to show that property (iii) in Definition~\ref{definition.mappings} is satisfied if $\rho(T_\infty)<1$. By Fact~\ref{fact.existence}, if $\rho(T_\infty)<1$ then there exists $\signal{x}^\star\in\real_{++}^N$ such that $\signal{x}^\star = T(\signal{x}^\star)$.  By \cite[Lemma~1(ii)]{renato2017performance} and $T_\infty(\signal{x}^\star)=\lim_{t\to\infty}T(t\signal{x}^\star)/t$ \cite[Corollary~2.5.3]{aus03}, we have $T_\infty(\signal{x}^\star)<T(\signal{x}^\star)=\signal{x}^\star$. As a result, there exists $c\in[0,1[$ such that $\epsilon T_\infty(\signal{x}^\star)\le \epsilon c \signal{x}^\star$ for all $\epsilon>0$. Therefore, by the positive homogeneity of asymptotic functions, we have $T_\infty(\epsilon \signal{x}^\star)\le \epsilon c \signal{x}^\star$ for all $\epsilon>0$. By Fact~\ref{fact.sup_conv}, we conclude that $T(\signal{x}+\epsilon\signal{x}^\star)-T(\signal{x})\le T_\infty(\epsilon\signal{x}^\star) \le \epsilon c \signal{x}^\star$ for every $\signal{x}\in\real_{+}^N$, and the desired result follows.
\end{proof}

\subsection{Concave mappings}
\label{sect.concave}

We now proceed to study convergence properties of the algorithm in \refeq{eq.fpi} with (continuous) positive concave mappings, and we recall that these mappings are also standard \cite[Proposition~1]{renato2016}. In particular, the next proposition proves that the spectral radii of asymptotic mappings can be used to obtain a lower bound for the estimation error of the sequence generated by \refeq{eq.fpi} -- see the inequality in \refeq{eq.conc_bound}. 

\begin{proposition}
	\label{prop.conc_bound}
	Assume that $T:\real_+^N\to\real_{++}^N$ is continuous and concave with $\emptyset\neq\mathrm{Fix}(T)=:\{\signal{x}^\star\}$, and denote by $\signal{v}\in\real_{+}^N\backslash\{\signal{0}\}$ any vector satisfying $T_\infty(\signal{v})=\rho(T_\infty)\signal{v}$ (a vector with this property always exists \cite{nussbaum1986convexity}). To simplify notation,  define $\rho:=\rho(T_\infty)<1$, where the inequality follows from Fact~\ref{fact.existence}. Then each of the following holds:
	\begin{itemize}
		\item[(i)] $(\forall n\in\Natural)(\forall\epsilon>0)~T^n(\signal{x}^\star+\epsilon\signal{v})\ge\signal{x}^\star+\rho^n\epsilon\signal{v}$
		\item[(ii)] $(\forall n\in\Natural)(\forall\epsilon>0)~\signal{x}^\star\ge\epsilon\signal{v}\Rightarrow	
		T^n(\signal{x}^\star-\epsilon\signal{v})\le\signal{x}^\star-\rho^n\epsilon\signal{v}$
		\item[(iii)] If $\signal{x}_1\in\real_+^N$ is such that $\signal{x}_1\le \signal{x}^\star-\epsilon\signal{v}$ or $\signal{x}_1\ge \signal{x}^\star+\epsilon\signal{v}\ge\signal{0}$ for some $\epsilon>0$, then 
		\begin{align}		
		\label{eq.conc_bound}
\rho^n\epsilon\|\signal{v}\|\le\|T^n(\signal{x}_1)-\signal{x}^\star\|\to0
\end{align}
 for any monotone norm $\|\cdot\|$ and every $n\in\Natural$.
		
	\end{itemize}
\end{proposition}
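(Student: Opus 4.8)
The plan is to reduce all three parts to a single ``one-step expansion'' inequality and then iterate it. Since $T$ is continuous, concave, and positive, each coordinate function $T^{(i)}$ (extended by $+\infty$ outside $\real_+^N$) satisfies the hypotheses of Lemma~\ref{lemma.inf_conc}, and --- recalling from Sect.~\ref{sect.pre} that the asymptotic mapping of a continuous concave mapping is well defined on all of $\real_+^N$ --- we obtain, coordinatewise, $T(\signal{z}+\signal{d})\ge T(\signal{z})+T_\infty(\signal{d})$ for all $\signal{z},\signal{d}\in\real_+^N$, simply because $T^{(i)}(\signal{z}+\signal{d})-T^{(i)}(\signal{z})$ is no smaller than the infimum that defines $T^{(i)}_\infty(\signal{d})$. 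Taking $\signal{d}=\epsilon\signal{v}$ and using positive homogeneity of $T_\infty$ together with $T_\infty(\signal{v})=\rho\signal{v}$ gives
\[
(\forall \signal{z}\in\real_+^N)(\forall \epsilon>0)\qquad T(\signal{z}+\epsilon\signal{v})\ \ge\ T(\signal{z})+\rho\,\epsilon\,\signal{v}.
\]

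First I would promote this to arbitrary $n\in\Natural$ by induction. Assuming $T^n(\signal{z}+\epsilon\signal{v})\ge T^n(\signal{z})+\rho^n\epsilon\signal{v}$ for every $\signal{z}\in\real_+^N$, apply $T$ to both sides --- which preserves the inequality because a positive concave mapping is a standard interference mapping \cite[Proposition~1]{renato2016}, hence monotone --- then use the displayed one-step estimate with $\signal{z}$ replaced by $T^n(\signal{z})\in\real_{++}^N$ and $\signal{d}$ by $\rho^n\epsilon\signal{v}\in\real_+^N$, and apply homogeneity once more; this yields $T^{n+1}(\signal{z}+\epsilon\signal{v})\ge T^{n+1}(\signal{z})+\rho^{n+1}\epsilon\signal{v}$. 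The only point requiring care is that every vector fed to $T$ or $T_\infty$ remains in $\real_+^N$, which is automatic since $\rho\ge0$, $\signal{v}\ge\signal{0}$, and $T$ maps into $\real_{++}^N$. We thus obtain the \emph{master inequality} $T^n(\signal{z}+\epsilon\signal{v})\ge T^n(\signal{z})+\rho^n\epsilon\signal{v}$ valid for all $n\in\Natural$, $\epsilon>0$, and $\signal{z}\in\real_+^N$.

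Part (i) is the master inequality at $\signal{z}=\signal{x}^\star$, using $T^n(\signal{x}^\star)=\signal{x}^\star$. For part (ii) I would set $\signal{z}=\signal{x}^\star-\epsilon\signal{v}$, a legitimate point of $\real_+^N$ precisely because of the hypothesis $\signal{x}^\star\ge\epsilon\signal{v}$; the master inequality then reads $\signal{x}^\star=T^n(\signal{x}^\star)\ge T^n(\signal{x}^\star-\epsilon\signal{v})+\rho^n\epsilon\signal{v}$, i.e.\ $T^n(\signal{x}^\star-\epsilon\signal{v})\le\signal{x}^\star-\rho^n\epsilon\signal{v}$ (this $\le$ form, rather than the printed $\ge$, is the one that part (iii) actually uses). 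For part (iii): if $\signal{x}_1\ge\signal{x}^\star+\epsilon\signal{v}$, iterated monotonicity and (i) give $T^n(\signal{x}_1)\ge\signal{x}^\star+\rho^n\epsilon\signal{v}$, so $T^n(\signal{x}_1)-\signal{x}^\star\ge\rho^n\epsilon\signal{v}\ge\signal{0}$ and monotonicity of the norm yields $\|T^n(\signal{x}_1)-\signal{x}^\star\|\ge\rho^n\epsilon\|\signal{v}\|$; if instead $\signal{x}_1\le\signal{x}^\star-\epsilon\signal{v}$, then $\signal{x}_1+\epsilon\signal{v}\le\signal{x}^\star$, so the master inequality followed by iterated monotonicity gives $T^n(\signal{x}_1)+\rho^n\epsilon\signal{v}\le T^n(\signal{x}_1+\epsilon\signal{v})\le T^n(\signal{x}^\star)=\signal{x}^\star$, whence $\signal{x}^\star-T^n(\signal{x}_1)\ge\rho^n\epsilon\signal{v}\ge\signal{0}$ and again the norm bound follows; finally $\|T^n(\signal{x}_1)-\signal{x}^\star\|\to0$ is nothing new, as $T$ is a standard interference mapping with $\mathrm{Fix}(T)\neq\emptyset$, so the fixed-point iteration converges to $\signal{x}^\star$ from any starting point \cite{yates95}.

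The main obstacle here is not any single estimate --- once the master inequality is in hand, (i)--(iii) are a few lines --- but the bookkeeping: confirming that the asymptotic mapping is genuinely defined on all of $\real_+^N$ so that Lemma~\ref{lemma.inf_conc} applies coordinatewise, keeping every iterate and perturbation inside $\real_+^N$ through the induction, and above all tracking the direction of each inequality, since part (ii) must be invoked in its $\le$ form to produce the lower bound on the estimation error claimed in part (iii).
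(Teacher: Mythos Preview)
Your proof is correct and follows essentially the same route as the paper: obtain the one-step inequality $T(\signal{z}+\epsilon\signal{v})\ge T(\signal{z})+\rho\epsilon\signal{v}$ from Lemma~\ref{lemma.inf_conc} and iterate it via monotonicity of $T$, with the minor (and tidy) variation that you prove the master inequality for arbitrary $\signal{z}\in\real_+^N$ rather than only at $\signal{z}=\signal{x}^\star$, which lets parts (ii) and (iii) fall out without a second induction. You are also right that the inequality in (ii) should read $\le$ rather than the printed $\ge$: the paper omits the proof of (ii), and the $\le$ form is both what the concavity argument actually yields and what the lower bound in (iii) requires.
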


\begin{proof}
	(i) We prove the result by induction on $n$. By Lemma~\ref{lemma.inf_conc}, we know that
	\begin{align}
	\label{eq.conc_ineq}
	(\forall \signal{d}\in\real_+^N)(\forall \signal{x}\in\real_+^N) T_\infty(\signal{d})\le T(\signal{x}+\signal{d})-T(\signal{x}).
	\end{align}
	In particular, for $\signal{d}=\epsilon\signal{v}$ and $\signal{x}=\signal{x}^\star$ with $\epsilon>0$ arbitrary, we have 
	$\rho\epsilon\signal{v}=T_\infty(\epsilon\signal{v})\le T(\signal{x}^\star+\epsilon\signal{v})-\signal{x}^\star$, which shows that the desired inequality is valid for $n=1$. Now assume that $T^n(\signal{x}^\star+\epsilon\signal{v})\ge\signal{x}^\star+\rho^n\epsilon\signal{v}$ is valid for an arbitrary $n\in\Natural$. As a consequence of the monotonicity of $T$, we have
	\begin{align}
	\label{eq.monotonicity}
	T^{n+1}(\signal{x}^\star+\epsilon\signal{v})\ge T(\signal{x}^\star+\rho^n\epsilon\signal{v}).
	\end{align}
	Now substitute $\signal{d}=\rho^n\epsilon\signal{v}$ and $\signal{x}=\signal{x}^\star$ into \refeq{eq.conc_ineq} and use the positive homogeneity property of $T_\infty$ to verify that $\rho^{n+1}\epsilon\signal{v}=T_\infty(\rho^n\epsilon\signal{v})\le T(\signal{x}^\star+\rho^n\epsilon\signal{v})-\signal{x}^\star$, and thus $\rho^{n+1}\epsilon\signal{v}+\signal{x}^\star\le T(\signal{x}^\star+\rho^n\epsilon\signal{v})$. Combining this inequality with that in \refeq{eq.monotonicity}, we obtain the desired result $T^{n+1}(\signal{x}^\star+\epsilon\signal{v})\ge \signal{x}^\star+\rho^{n+1}\epsilon\signal{v}$, and the proof is complete.
	
	(ii) The proof is similar to that in part (i), so it is omitted for brevity. 
	(iii) If $\signal{x}_1 \ge \signal{x}^\star+\epsilon\signal{v}\ge \signal{x}^\star$ for some $\epsilon>0$, then $(\forall n\in\Natural)~T^n(\signal{x}_1) \ge T^n(\signal{x}^\star+\epsilon\signal{v})\ge T^n(\signal{x}^\star)=\signal{x}^\star$ by monotonicity of $T$. Therefore, $(\forall n\in\Natural)~T^n(\signal{x}_1) - \signal{x}^\star \ge T^n(\signal{x}^\star+\epsilon\signal{v})-\signal{x}^\star\ge \rho^n\epsilon\signal{v}$, where the last inequality follows from part (i). Monotonicity of the norm $\|\cdot\|$ now shows that  $(\forall n\in\Natural)\|T^n(\signal{x}_1) - \signal{x}^\star\| \ge \rho^n\epsilon \|\signal{v}\|$. In addition, positive concave mappings are standard interference mappings \cite[Proposition~1]{renato2016}, so  $\|T^n(\signal{x}_1)-\signal{x}^\star\|\to 0$ by \cite[Theorem~2]{yates95}, and the proof for $\signal{x}_1 \ge \signal{x}^\star+\epsilon\signal{v}\ge \signal{x}^\star$ is complete. We skip the proof for $\signal{x}_1 \le \signal{x}^\star-\epsilon\signal{v}\le \signal{x}^\star$ because it is similar.
\end{proof}

\section{Numerical example}

To illustrate the results obtained in the previous section in a concrete application, we study the convergence speed of a well-known algorithm for load estimation in wireless networks \cite{Majewski2010,siomina12,feh2013,siomina2014,renato14SPM,renato2016,renato17globalsip}. The algorithm is simply the iteration in \refeq{eq.fpi} with the concave mapping given by $T:\real_+^N\to\real^N_{++}:\signal{x}\mapsto [t_1(\signal{x}),\cdots,t_N(\signal{x})]$, where, for all $i\in\{1,\ldots,N\}=:\mathcal{M}$ and all $\signal{x}\in\real_{+}^N$,

\begin{align}
\label{eq.load}
	t_i(\signal{x}):=\sum\limits_{j\in\mathcal{N}_i}\dfrac{d_j}{KB\log_2\left(1+\dfrac{p_i g_{i,j}}{\sum\limits_{k\in\setm\backslash\{i\}} x_k p_k g_{k,j}+\sigma^2}\right)},
\end{align}
$\mathcal{M}$ is the set of base stations, $\mathcal{N}_i\neq\emptyset$ is the set of users connected to base station $i$, $d_j\in\real_{++}$ is the traffic (in bits/s) requested by the $j$th user, $K\in\Natural$ is number of resource blocks in the system, $B\in\real_{++}$ is the bandwidth per resource block, $p_i>0$ is the transmit power per resource block of base station $i$, $g_{i,j}>0$ is the pathloss between base station $i$ and user $j$, and $\sigma^2>0$ is the noise power per resource block. The $i$th component $x_i^\star$ of the fixed point $\signal{x}^\star\in\mathrm{Fix}(T)$, if it exists, shows the fraction of resource blocks that base station $i$ requires to satisfy the traffic demand of its users. Although we cannot have $x^\star_i>1$ in real network deployments, knowledge of these values is useful to rank base stations according to the unserved traffic demand \cite{siomina12}. See \cite{Majewski2010,siomina12,renato14SPM,ho2014data,renato2016,renato17globalsip,feh2013} for additional details on the load estimation problem. 

The asymptotic mapping associated with $T$ is given by \cite{renato17globalsip} $T_\infty:\real_{+}^N\to\real_{+}^N:\signal{x}\mapsto \mathrm{diag}(\signal{p})^{-1}\signal{M}\mathrm{diag}(\signal{p})\signal{x}$,
where $\mathrm{diag}(\signal{p})\in\real_+^{N\times N}$ is a diagonal matrix with diagonal elements obtained from the components of the power vector $\signal{p}:=[p_1,\ldots,p_N]$, and the component $[\signal{M}]_{i,k}$ of the $i$th row and $k$th column of the matrix $\signal{M}\in\real^{N\times N}_+$ is given by $[\signal{M}]_{i,k} = 0$ if $i= k$ or $[\signal{M}]_{i,k}= \sum\limits_{j \in \mathcal{N}_i} {\mathrm{ln}(2)  d_j g_{k,j}}/({KB g_{i,j}})$ otherwise. The asymptotic mapping $T_\infty$ is linear, so $\rho(T_\infty)$ is simply the spectral radius of the matrix $\signal{M}$. 

With the results in Proposition~\ref{prop.conc_bound}, we can prove that the convergence rate of the recursion in \refeq{eq.fpi} is expected to decrease with increasing traffic. To this end, let $T^\prime(\signal{x})=\beta T(\signal{x})$ for every $\signal{x}\in\real_+^N$, where $\beta\in\real_{++}$ is a design parameter. This new mapping $T^\prime$ can be obtained, for example, by scaling uniformly the traffic demand of every user by a factor $\beta$, and we assume that $\rho(T_\infty)<1$. We can verify that $\rho(T^\prime_\infty) = \beta\rho(T_\infty)>0$. As a result, in light of the bound in \refeq{eq.conc_bound}, by increasing $\beta$, the algorithm is expected to become increasingly slow as $\rho(T^\prime_\infty) = \beta\rho(T_\infty)<1$ approaches the value one.  (The algorithm diverges if $\beta\rho(T_\infty)\ge 1$.)

Fig.~\ref{fig.conc_mapping} illustrates the above points. It shows the estimation error $\|\signal{x}_n-\signal{x}^\star\|_2$ of the fixed point iteration (FPI) in \refeq{eq.fpi} with $\signal{x}_1=\signal{0}$ and the bound in Proposition~\ref{prop.conc_bound}(iii) for the mappings $T$ and $T^\prime$ described above. The parameter $\beta>1$ for $T^\prime$ was chosen to satisfy $\rho(T^\prime_\infty)=0.99$. For the construction of $T$, we use a scenario similar to that in \cite[Sect.~V-A]{renato2016}. Briefly, we obtained snapshots of a network with 1,500 users requesting a traffic of 300 kbps each, and we picked one snapshot with $\rho(T_\infty)<0.99$, in which case we also have $\mathrm{Fix}(T)\neq\emptyset$ as an implication of Fact~\ref{fact.existence}.  Other parameters of the simulation were the same as those in \cite[Table~I]{renato2016}. To compute the bound in \refeq{eq.conc_bound} for $T$, we set the vector $\signal{v}$  to the right eigenvector of $\mathrm{diag}(\signal{p})^{-1}\signal{M}\mathrm{diag}(\signal{p})$ (obtained by using \refeq{eq.krause_iter}) associated with the eigenvalue $\rho(T_\infty)$. In turn, the scalar $\epsilon$ in \refeq{eq.conc_bound} was set to the largest positive real such that $\signal{x}^\star-\epsilon\signal{v}\in\real_{+}^N$. The bound for $T^\prime$ was constructed in a similar way. As expected, the numerical results in Fig.~\ref{fig.conc_mapping} are consistent with the theoretical findings.

\begin{figure}
	\begin{center}
		\includegraphics[width=\columnwidth]{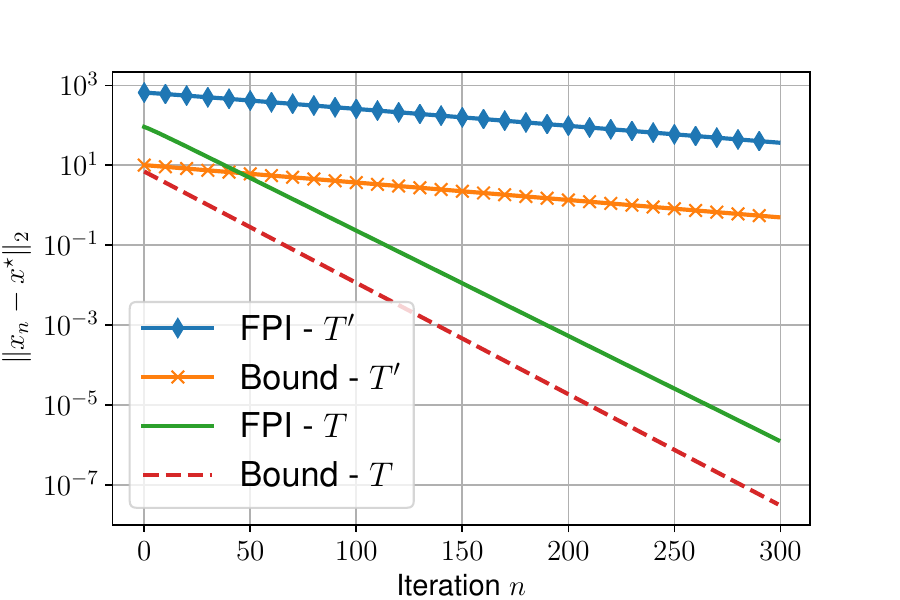}
		\caption{Estimation error as a function of the number of iterations.}
		\label{fig.conc_mapping}
	\end{center}
\end{figure}

\section{Conclusions and Final Remarks}
 
We have shown that knowledge of the spectral radius of asymptotic mappings is useful to relate standard and contractive interference mappings, and with this knowledge we also obtain information about the convergence speed of widely used instances of the recursion in \refeq{eq.fpi}. One advantage of the analysis shown here over existing results in the literature is that we do not assume the mapping $T$ in \refeq{eq.fpi} to be constructed by combining a finite number of affine functions. Furthermore, unlike previous results, in the proposed approaches the parameters used to obtain bounds for the convergence speed are easy to estimate. The  bounds derived here show, for example, that the converge speed of a well-known iterative algorithm for load estimation in wireless networks is expected to decrease with increasing traffic.

\bibliographystyle{IEEEbib}
\bibliography{IEEEabrv,references}

\end{document}